\newcommand{\dbloverline}[1]{\overline{\dbl@overline{#1}}}
\newcommand{\dbl@overline}[1]{\mathpalette\dbl@@overline{#1}}
\newcommand{\dbl@@overline}[2]{%
	\begingroup
	\sbox\z@{$\m@th#1\overline{#2}$}%
	\ht\z@=\dimexpr\ht\z@-2\dbl@adjust{#1}\relax
	\box\z@
	\ifx#1\scriptstyle\kern-\scriptspace\else
	\ifx#1\scriptscriptstyle\kern-\scriptspace\fi\fi
	\endgroup
}
\newcommand{\dbl@adjust}[1]{%
	\fontdimen8
	\ifx#1\displaystyle\textfont\else
	\ifx#1\textstyle\textfont\else
	\ifx#1\scriptstyle\scriptfont\else
	\scriptscriptfont\fi\fi\fi 3
}
 \newcommand{\Rdddirectone}[3]{R^{dir_1}}
 \newcommand{\Rdddirectzero}[2]{R^{dir_0}}
 \newcommand{\Rddrelay}[2]{R^{rel}}
 \newcommand{\Rcellularzero}[2]{R^{cel_0}}
 \newcommand{\Rcellularone}[3]{R^{cel_1}}
\newtheorem {theorem}{Theorem}
\newtheorem {lemma}{Lemma}
\begin{document}
\IEEEoverridecommandlockouts


\raggedbottom
		
\title{On the Performance of Non-Orthogonal Multiple Access (NOMA): Terrestrial vs. Aerial Networks}

\author{
	Mehdi Monemi$^*$,
		Hina Tabassum$^{**}$, and
		Ramein Zahedi$^{**}$
	\\
	$^*$Department of Electrical  Engineering,
		Salman Farsi University of Kazerun, Kazerun, Iran
	\\
	$^{**}$Department of Electrical Engineering and Computer Science, York University, Canada
	\\
		(monemi@kazerunsfu.ac.ir, hinat@yorku.ca, rgz96@my.yorku.ca)
	 \thanks{M. Monemi was a visiting research Professor in the York University, Canada.  This work was supported by the Discovery
		Grant from the Natural Sciences and Engineering Research Council of Canada
		(NSERC). \vspace{3mm}\newline978-1-7281-5320-9/20/\$31.00~
		\copyright~2020 IEEE}
}

\maketitle

	\begin{abstract}
	Non-orthogonal multiple access (NOMA) is  a promising multiple access technique for beyond fifth generation  (B5G) cellular wireless networks, where several users can be served on a single time-frequency resource block, using the concepts of superposition coding at the transmitter and self-interference cancellation (SIC) at the receiver. For terrestrial networks, the achievable performance gains  of  NOMA  over  traditional  orthogonal  multiple access  (OMA)  are  well-known. However,  the achievable performance of NOMA  in  aerial networks, compared to terrestrial networks, is not well-understood. In this paper, we  provide a unified analytic  framework to characterize the outage probabilities of users  considering various network settings, such as i) uplink and downlink NOMA and OMA in aerial networks, and  ii) uplink and downlink NOMA and OMA in terrestrial networks. In particular,  we derive  closed-form rate outage probability expressions for two users, considering line-of-sight (LOS) Rician fading channels.    Numerical results validate the derived analytical expressions and  demonstrate the difference of   outage probabilities  of users with OMA and NOMA transmissions. Numerical results  unveil that the optimal UAV height  increases with the increase in Rice-$K$ factor, which implies strong line-of-sight (LOS) conditions.

	\end{abstract}
\begin{keywords}
    Non-orthogonal multiple access (NOMA), Unmanned aerial vehicle (UAV), terrestrial/aerial networks, SINR, outage probability.
\end{keywords}
	

\thispagestyle{empty}

\section{Introduction}
Unmanned aerial vehicle (UAV)-enabled wireless communications is  indispensable for seamless functioning of the emerging fifth generation or sixth generation networks \cite{li2018uav, hossain2014evolution}.  In contrast to terrestrial cellular networks, the distinct features of UAV networks include wider coverage, three dimensional flexible deployment, line-of-sight (LOS) transmissions, and swift on-demand deployment and removal of UAVs \cite{sharma2017uav, sekander2018multi}. The performance gains of UAV-enabled wireless communications are quite evident in the existing literature with orthogonal multiple access (OMA) transmissions. Nevertheless, to achieve massive connectivity in aerial networks, non-orthogonal multiple access (NOMA) is another potential technique that can serve multiple ground users at the same time/frequency/code domain, but with different power levels \cite{originalNOMA, tabassum2017uplink}. 
However, the gains of NOMA in aerial networks (when compared to terrestrial networks) are not comprehensively explored taking into account the distinct channel features such as 1) line-of-sight (LoS) Rician fading in aerial networks compared to non-LOS (NLOS) Rayleigh fading in terrestrial networks, and 2) aerial and terrestrial path-loss models.


In what follows, we highlight the existing literature focusing on the uplink and downlink NOMA in UAV networks. In \cite{sharma2017uav}, the authors have considered a downlink UAV-assisted NOMA network  of two users. By adopting Rician fading for LOS UAV-to-ground links, the relative performance between NOMA and OMA was analyzed. In \cite{rupasinghe2017non}, the authors have focused on optimizing the altitude of the UAV by employing the outage sum rates in a multi-antenna UAV downlink NOMA network. In \cite{nasir2019uav}, for downlink UAV NOMA networks, the problem of max-min rate optimization under the constraints related to UAV altitude, the amount of power, and bandwidth allocated to users was considered. In \cite{sohail2018non}, a user pairing and power allocation scheme was presented to maximize the sum-rate of users and reduce the energy consumption of the UAV.   In \cite{cui2019multiple},  joint trajectory design and resource allocation problem was formulated and solved considering both OMA and NOMA modes. In \cite{sun2018cyclical}, the cyclical NOMA was introduced for UAV networks wherein the UAV's flight cycle was divided into several time slots and the minimum throughput of all ground users was maximized, by jointly optimizing user scheduling  and UAV trajectory.

Very few existing works have studied the performance of  uplink NOMA in UAV networks due to the distinct desired and interference channel statistics. In \cite{farajzadeh2019uav}, the authors studied uplink UAV-assisted backscatter networks, wherein the UAV acts both as a mobile power transmitter and as an information collector. A resource management scheme was proposed  to maximize the number of successfully decoded bits while minimizing the UAV's flight time and optimizing its altitude. In \cite{duan2019resource}, an uplink NOMA transmission scheme for multi-UAV aided IoT networking system was proposed, wherein the objective was to maximize the total uplink capacity while optimizing the subchannel assignment, the uplink transmit power of IoT nodes, and the altitude of UAVs. In particular, a clustering method was proposed to group IoT nodes and a low complexity algorithm was designed for efficient subchannel assignment based on the results from clustering. Finally, in \cite{UplinkNOMArandomaccess}, maximum stable throughput for uplink NOMA was investigated. By expressing the stability condition in terms of UAV's hovering altitude, beamwidth, and traffic intensity, a stabilizing algorithm was devised which recursively controls users' access, and adjusts the altitude and beamwidth.

As in the aforementioned research works, the gains of NOMA over OMA are typically investigated in either downlink or uplink. Different from the literature, this paper  provides a unified analytic  framework to characterize the outage probabilities of users   considering various network settings, such as 1) uplink and downlink NOMA and OMA in aerial networks, and  2) uplink and downlink NOMA and OMA in terrestrial networks. The rate outage probability expressions  consider line-of-sight (LOS) Rician fading for aerial transmissions  and non-line-of-sight (NLOS) Rayleigh fading for terrestrial transmissions.     Numerical results validate the derived analytical expressions and  demonstrate the difference of outage probabilities  of users in both the aerial  and  terrestrial networks. 
Numerical results  unveil that the optimal UAV height  increases with the increase in Rice-$K$ factor, which implies strong line-of-sight (LOS) conditions. Also, the results demonstrate that  the NOMA gains are evident for low to moderate values of target spectral efficiency in aerial networks. On the other hand, in terrestrial networks, NOMA gains can be observed for higher values of target rate threshold.

\section{System Model and Assumptions}
\subsection{System Model}
We consider a NOMA network  for the downlink and uplink transmissions. For either of the uplink and downlink transmission, we consider either terrestrial or aerial UAV base stations (BSs). Fig.~\ref{fig:uplink_downlink}  shows the downlink and uplink NOMA for UAV network, respectively. The terrestrial or UAV BS communicates with two  ground users $U_1$ and $U_2$ according to the NOMA transmission principle.  We assume that the spectrum resources at UAV for the uplink communication with the users and for the backhaul are orthogonal. Due to high complexity of successive interference cancellation (SIC) with multiple users in NOMA, we focus on the two-user case  and defer multi-user case for future works. 

We consider three-dimensional Cartesian coordinates $(x,y,z)$ where the ground plane is represented by $(x,y,0)$.
We have $N$ users who are randomly located in the coverage area of the UAV (or terrestrial BS). Out of $N$ users, two users $U_1$ and $U_2$ are the users with minimum and maximum distance $r_{min}$ and $r_{max}$ from the cell-center, respectively, and thus,  their distances from the UAV can be given as follows:
\begin{subequations}\label{eq:EUCDIST}
\begin{align}
    d_1=d_{\mathrm{min}} &= (h^2 + r_{\mathrm{min}}^2)^{\frac{1}{2}},\label{eq:first_user}\\
    d_2=d_{\mathrm{max}} &= (h^2 + r_{\mathrm{max}}^2)^{\frac{1}{2}}.\label{eq:second_user}
\end{align}
\end{subequations}

\begin{figure}
    \centering
    \includegraphics[width=\linewidth]{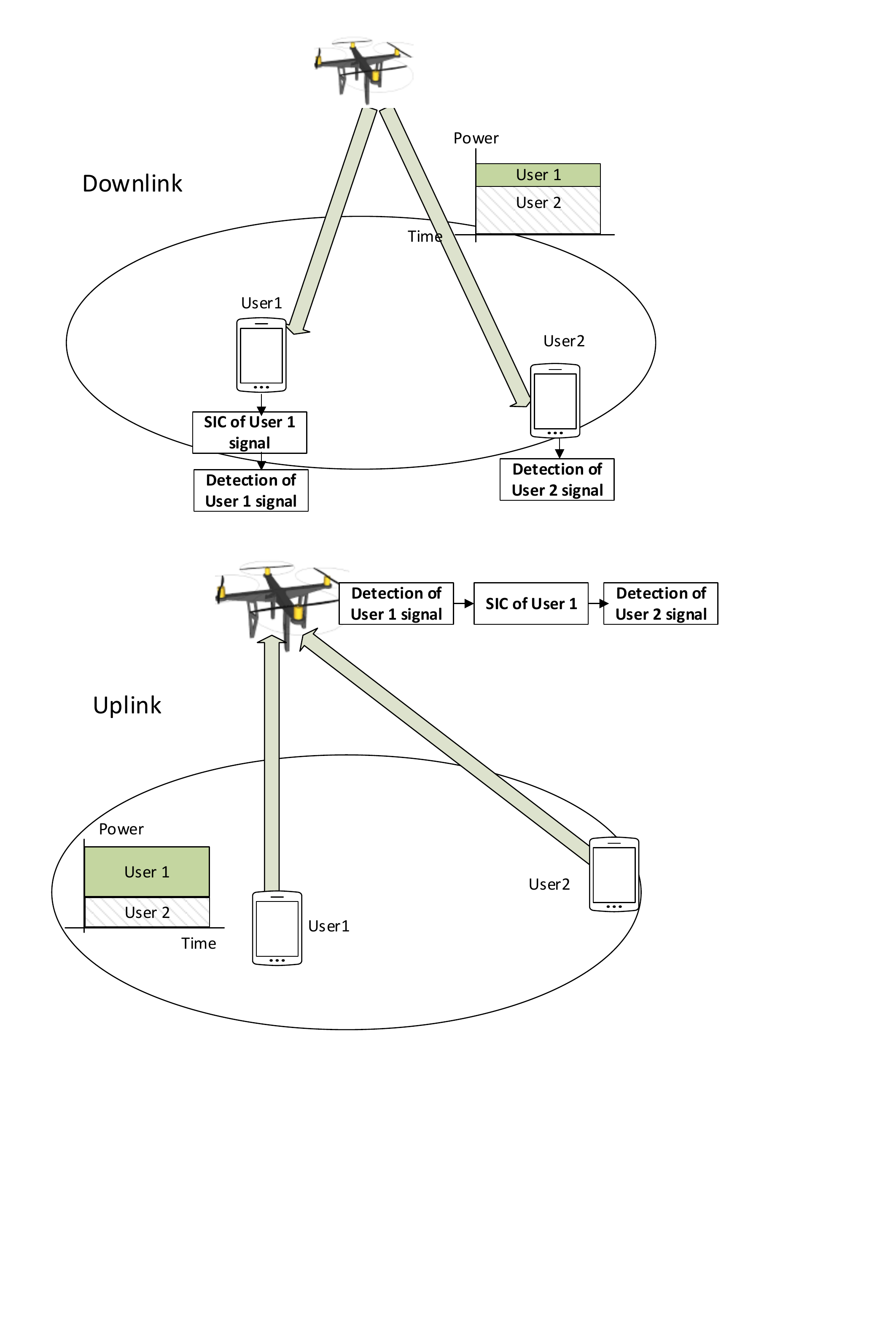}  \vspace{-30mm}
    \caption{Graphical illustration of uplink and downlink NOMA in aerial networks.}
    \label{fig:uplink_downlink}
\end{figure}
Based on the above, we can generalize the Euclidean distance from the ground users $U_1$ and $U_2$ to UAV, respectively, as follows:
\begin{align}
    d_i &= (h^2 + r_i^2)^{\frac{1}{2}}, \ i=1,2
\end{align}
where $r_i=\sqrt{x_i^2 + y_i^2}$. For the case of terrestrial BS, we can simply substitute $h=0$ and thus we have $d_i=r_i$.


\subsection{Channel Model}
Following \cite{wirelessrelaynetwork}, the wireless channels between ground users and UAV are assumed to experience large-scale path-loss and small-scale quasi-static frequency non-selective fading. 
\subsubsection{Path-Loss and Fading for Terrestrial Links} For the case of terrestrial BS, the fading for user $i$ is generally assumed to be Rayleigh  denoted by $\Phi_i$ whose probability density function (pdf) and cumulative distribution function (cdf) of (power) channel gain are expressed, respectively, as follows:
\begin{align}
    \label{eq:exponential_pdf}
    f_{\Phi_i}(x)
        &=\lambda_i e^{-\lambda_i x},
    \\
    \label{eq:exponential_cdf}
    F_{\Phi_i}(x)
        &=1-e^{-\lambda_i x},
\end{align}
where $\lambda_i$ is the fading parameter for user $i$.
The large-scale path-loss is expressed (in dB) as follows:
\begin{equation}
	L^t_{dB}(d_i) = 10\alpha\log_{10}\left(d_i\right) + \beta,
\end{equation}
where $\beta = 20\log_{10}\left(\frac{4 \pi f_c}{c}\right) + \eta$ in which $f_c$ is the carrier frequency, $c$ is the speed of light, and $\eta$ is a constant depending on the environment. The corresponding terrestrial channel gain is obtained by
\begin{align}
    \label{eq:terrestrial_gain}
    G^t(d_i)=10^{-0.1 L^t_{dB}(d_i)}.
\end{align}
\subsubsection{Path-Loss and Fading for Aerial Links} For the case of UAV BS, each user $i$ establishes aerial link with Rician fading  denoted by $\chi_i$ with two degrees of freedom.  The pdf of user $\chi_i$ is thus expressed as follows:
\begin{equation}
    f_{\chi_i}(x) = \frac{(1+K_i)\mathrm{e}^{-K_i}}{\Omega_i}\mathrm{e}^{-\frac{(1+K_i)x}{\Omega_i}}\mathcal{I}_o\left(2\sqrt{\frac{K_i(1+K_i)x}{\Omega_i}}\right)\mathrm{,}
\end{equation}
where $\mathcal{I}_o(\cdot)$ is the zeroth-order modified Bessel function of the first kind, and $\Omega_i$ and $K_i$ are referred, respectively, as the fading power and Rician shape factor for user $i$. The corresponding cdf is expressed as follows:
\begin{equation}
    \label{eq:rician_cdf}
    F_{\chi_i}(x) = 1 - \mathcal{Q}\left(\sqrt{2K_i}, \sqrt{\frac{2(1+K_i)x}{\Omega_i}}\right)\mathrm{,}
\end{equation}
where $\mathcal{Q}(a,b) \triangleq \int_{b}^{\infty} x\mathrm{e}^{-\frac{a^2+x^2}{2}}\mathcal{I}_o(ax) dx$ denotes the Marcum Q-function of first order \cite{sharma2017uav}. The large-scale aerial path-loss is expressed (in dB) as follows:
\begin{equation}
	L_{dB}^a(h,r_i) = 20\log\left(\sqrt{h^2+r_i^2}\right) + AP_{LOS}(h,r_i) + B.
\end{equation}
We have $A = \eta_{\mathrm{LOS}} - \eta_{\mathrm{NLOS}}$ and  $B = 20\log\left(\frac{4 \pi f_c}{c}\right) + \eta_{\mathrm{NLOS}}$ in which $\eta_{\mathrm{LOS}}\ \mathrm{and}\ \eta_{\mathrm{NLOS}}$ are the losses corresponding to the LOS and non-LOS reception depending on the environment.
The probability of LOS  is represented as \cite{al2014optimal}:
\begin{equation}
    \label{eq:PLOS}
	{P_{LOS}(h,r_j) = \frac{1}{1 + a\exp\left(-b\left(\arctan\left(\frac{h}{r_j}\right)-a\right)\right)}},
\end{equation}
where $a \ \mathrm{and} \ b$ are constant values based on the choice of the urban environment.  Finally, the corresponding aerial channel gain is given by
\begin{align}
    \label{eq:aerial_gain}
    G^a(h,r_i)=10^{-0.1 L^a_{dB}(h,r_i)}.
\end{align}

\section{Terrestrial and Aerial NOMA vs. OMA}
In this section, we first briefly review the uplink and downlink NOMA principles for uplink and downlink scenarios. Then, we characterize the performance of users in NOMA and OMA considering terrestrial (ground BS) and aerial (UAV BS) networks for both the uplink and downlink scenarios. We consider Rician fading for aerial NOMA network and Rayleigh fading for terrestrial NOMA network.

\subsection{Principles of Uplink and Downlink NOMA}
Consider a network consisting of a cluster of NOMA users. 
In the downlink scenario,
the total downlink transmit power of BS for a cluster of NOMA users is limited to a maximum allowed power level. The BS transmits the super-imposed signal of NOMA  users, while allocating higher and lower power levels to far and near  users, respectively. Here,
the near  user performs successive interference cancellation (SIC). Subsequently, the highest channel gain user cancels all intra-cluster interferences, whereas the lowest channel gain user receives the interferences from all users within its cluster.

Different from the downlink scenario where the total downlink transmit power is limited, in the uplink scenario, the transmit power level of each user is independently limited to the user's maximum allowed power. Thus, the receiving power from the strongest user is likely the strongest at the BS. Therefore, the strongest user is decoded first at the BS and thus this user experiences the interference from all relatively weaker users in the same NOMA cluster, whereas the weakest user receives no interference from other users.

\subsection{Terrestrial and Aerial OMA: Spectral Efficiency}
 For both the uplink and downlink OMA transmissions, the channel capacity of user 
$i$ for the terrestrial and aerial scenarios is obtained, respectively, as follows:
\begin{align}
    C_i^{(\mathrm{oma,t})} &= 0.5\log_2(1+ P_i G^t(d_i)\Phi_i), \ \forall  i=1,2 
    \\
     C_i^{(\mathrm{oma,a})} &= 0.5\log_2(1+P_i G^a(h,r_i)\chi_i), \ \forall  i=1,2 
\end{align}
where $P_i$ is the normalized uplink (or downlink) transmit power of for user $i$ with respect to the receiver noise power  (i.e., $P_i=P_i^t/n_0$ where $P_i^t$ is the uplink (or downlink) transmit power for user $i$ and $n_0$ is the receiver noise power), $G^t$ and $G^a$ denote  the terrestrial and aerial channel power gains given by \eqref{eq:terrestrial_gain} and \eqref{eq:aerial_gain}, respectively, and $\Phi_i$ and $\chi_i$ are  Rayleigh and Rician fading for user $i$ respectively. Here, the coefficient $0.5$ is used to denote that half of the time is allocated to each user. For example if the time-division-multiple-access (TDMA)  is employed, each of the two users is only allowed to access half of the time resource.  Otherwise the orthogonality assumption is not valid.

\subsection{Terrestrial and Aerial NOMA Spectral Efficiency}

In what follows, for the terrestrial and aerial cases, we study the downlink and uplink NOMA separately.
\subsubsection{Downlink NOMA} Let $U_1$ and $U_2$ be strong and weak users respectively (i.e., users with strong and weak channels are referred as near and far users respectively). Also let $P=a_1P+a_2P$ be normalized aggregate downlink BS transmit power for $U_1$ and $U_2$ with respect to noise power, where $a_1P$ and $a_2P$ are normalized transmit powers for $U_1$ and $U_2$, respectively. Note that we have $a_1+a_2=1$.

The terrestrial downlink NOMA spectral efficiencies of $U_1$ and $U_2$ are given by
\begin{align}
    \label{eq:user1_terrestria_noma}
    \underline{C}_1^{(\mathrm{noma,t})} &= \log_2(1+a_1 P G^t(d_1)\Phi_1), 
    \\
    \label{eq:user2_terrestria_noma}
    \underline{C}_2^{(\mathrm{noma,t})} &= \log_2\left(1+\frac{a_2 P G^t(d_2)\Phi_2}{a_1 P G^t( d_2)\Phi_2 + 1}\right),
\end{align}
and the corresponding spectral efficiencies of $U_1$ and $U_2$ in aerial downlink NOMA are given as
\begin{align}
    \label{eq:user1_aerial_noma}
    \underline{C}_1^{(\mathrm{noma,a})} &= \log_2(1+a_1 P G^a(h,r_1)\chi_1), 
    \\
    \label{eq:user2_aerial_noma}
    \underline{C}_2^{(\mathrm{noma,a})} &= \log_2\left(1+\frac{a_2 P G^a(h,r_2)\chi_2}{a_1 P G^a(h, r_2)\chi_2 + 1}\right).
\end{align}

As seen in \eqref{eq:user1_terrestria_noma} and \eqref{eq:user1_aerial_noma}, $U_1$ is exposed to no interference since it performs SIC. On the other hand, the weak user (i.e., $U_2$) is exposed to the interference from the downlink transmission of $U_1$, as can be seen from \eqref{eq:user2_terrestria_noma} and \eqref{eq:user2_aerial_noma}. The users' spectral efficiencies in aerial and terrestrial NOMA are distinct due to their path-loss models and fading channels.

\subsubsection{Uplink NOMA}
In uplink NOMA, there exists two major differences compared to the downlink NOMA. That is,  each user has his own transmit power constraint, thus we do not have the power allocation coefficients as in the downlink NOMA. Besides, as opposed to the downlink NOMA, here the stronger user receives interference from the weaker user as the BS decodes the signal of strongest user first. Therefore, the terrestrial spectral efficiencies for uplink NOMA are expressed in the following equations:
\begin{align}
    \overline{C}_1^{(\mathrm{noma,t})} &= \log_2\left(1+\frac{P_1G^t(d_1)\Phi_1}{P_2 G^t(d_2)\Phi_2+1}\right), 
    \\
    \overline{C}_2^{(\mathrm{noma,t})} &= \log_2(1+P_2 G^t(d_2)\Phi_2),
\end{align}
and the corresponding spectral efficiencies for aerial uplink NOMA are given by
\begin{align}
    \label{eq:user1_aerial_noma_up}
    \overline{C}_1^{(\mathrm{noma,a})} &= \log_2\left(1+\frac{P_1G^a(h,r_1)\chi_1}{P_2 G^a(h,r_2)\chi_2+1}\right), \\
    \label{eq:user2_aerial_noma_up}
    \overline{C}_2^{(\mathrm{noma,a})} &= \log_2(1+P_2 G^a(h,r_2)\chi_2).
\end{align}

\subsection{Analytical Outage Probability Expressions}
Let $R^{th}$ be the target spectral efficiency threshold of user $i\in\{1,2\}$. The NOMA outage probability is then defined as the probability that user $i$ does not achieve its target-spectral efficiency threshold, i.e.,
\begin{align}
     P_{out,i}^{\mathrm{(noma)}} = \mathrm{\textbf{Pr}}[C_i < R^{th}],
\end{align}
where $C_i$ can be either of the spectral efficiencies of uplink/downlink terrestrial/aerial NOMA expressed in the previous section. In what follows, we formally express the analytical terrestrial and aerial outage probabilities of downlink and uplink NOMA separately.

\subsubsection{Downlink NOMA - Terrestrial and Aerial}
By using the downlink terrestrial spectral efficiencies defined in \eqref{eq:user1_terrestria_noma} and \eqref{eq:user2_terrestria_noma}, and aerial  spectral efficiencies defined in \eqref{eq:user1_aerial_noma} and \eqref{eq:user2_aerial_noma}, the outage probability of users in aerial NOMA and terrestrial NOMA can be derived as in the following lemma.
\begin{lemma}
     For a given channel realization of user $i$ with the terrestrial path-loss $|G^t(d_i)|$ and Rayleigh fading, the downlink outage probability for user $i\in\{1,2\}$ in terrestrial NOMA can be obtained as follows:
    \begin{align}                 
    \label{eq:pout_noma_downlink_terrestrial}
    \underline{P}_{out,i}^{\mathrm{(noma,t)}}=F_{\Phi_i}(\beta_i), \ \ i\in\{1,2\}
         \end{align}
     where
     $
        \beta_1=\dfrac{2^{R^{th}}-1}{a_1 P|G^t(d_1)|}
        ,
        \beta_2=\dfrac{(2^{R^{th}}-1)/P|G^t(d_2)|}{ (a_2 -(2^{R^{th}}-1)a_1 )}
     $, and $F_{\Phi_i}$ is given by \eqref{eq:exponential_cdf}. Similarly,
     for a given  large-scale channel realization $|G^a(h,r_i)|$, the aerial downlink outage probability for user $i\in\{1,2\}$ with Rician fading can be obtained as:
     \begin{align}  
     \label{eq:pout_noma_downlink_aerial}
     \underline{P}_{out,i}^{\mathrm{(noma,a)}}=F_{\chi_i}(\beta_i), \ \ i\in\{1,2\}
     \end{align}
     where
     $ \beta_1=\dfrac{2^{R^{th}}-1}{a_1 P|G^a(h,r_1)|}
        ,
        \beta_2=\dfrac{(2^{R^{th}}-1)/P|G^a(h,r_2)|}{ (a_2 -(2^{R^{th}}-1)a_1 )}
     $, and $F_{\chi_i}$ is given by \eqref{eq:rician_cdf}.
\end{lemma}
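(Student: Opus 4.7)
The plan is to obtain each expression by a direct manipulation of the definition $\underline{P}_{out,i}^{(\mathrm{noma})} = \mathrm{\textbf{Pr}}[C_i < R^{th}]$, inverting the spectral-efficiency formulas \eqref{eq:user1_terrestria_noma}--\eqref{eq:user2_aerial_noma} to rewrite the outage event as a one-sided tail condition on the single fading coefficient $\Phi_i$ or $\chi_i$. Since the path gain $|G^t(d_i)|$ (respectively $|G^a(h,r_i)|$), the power budget $P$, and the power split $(a_1,a_2)$ are all deterministic once the geometry is fixed, the outage probability collapses to a CDF evaluation, namely \eqref{eq:exponential_cdf} for the terrestrial Rayleigh case and \eqref{eq:rician_cdf} for the aerial Rician case.

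For user~1, the map $\Phi_1 \mapsto \log_2(1+a_1 P |G^t(d_1)| \Phi_1)$ is strictly increasing, so the event $C_1 < R^{th}$ is equivalent to $\Phi_1 < \beta_1$ after solving for the fading, and \eqref{eq:pout_noma_downlink_terrestrial} for $i=1$ follows by a single application of \eqref{eq:exponential_cdf}. For user~2, the SINR depends on $\Phi_2$ in both numerator and denominator. Setting $\gamma = 2^{R^{th}}-1$ and $g = P|G^t(d_2)|\Phi_2$, I would rearrange $a_2 g/(a_1 g+1) < \gamma$ into the linear inequality $g(a_2 - \gamma a_1) < \gamma$ and then solve for $\Phi_2$ to obtain $\Phi_2 < \beta_2$. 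Substituting into \eqref{eq:exponential_cdf} yields \eqref{eq:pout_noma_downlink_terrestrial} for $i=2$. The aerial counterparts \eqref{eq:pout_noma_downlink_aerial} are then obtained verbatim by replacing $|G^t(d_i)|$ with $|G^a(h,r_i)|$ and the Rayleigh CDF $F_{\Phi_i}$ with the Rician CDF $F_{\chi_i}$, since the manipulations above depend only on the monotonic structure of $C_i$ in the fading variable, not on its distribution.

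The main obstacle, and the one step requiring care rather than calculation, is the sign of $a_2 - \gamma a_1$ in the user~2 derivation. The rearrangement is valid only when $a_2 > \gamma a_1$, which is precisely the feasibility condition for $U_2$'s target rate to be supportable under the chosen NOMA power split; when it fails, $\beta_2 \le 0$ and the outage probability equals one because no realization of $\Phi_2$ or $\chi_2$ can push the SINR above $\gamma$. The lemma implicitly presumes the feasible regime so that $\beta_2$ is a meaningful CDF argument, and this assumption should be flagged so that the formula is applied only within its range of validity.
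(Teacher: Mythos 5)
Your proof is correct and is precisely the inversion argument the paper implicitly relies on: the lemma is stated without proof, but the same manipulation (rewriting the outage event $C_i<R^{th}$ as an inequality on the single fading variable and then evaluating its CDF) is exactly what the paper carries out in its proofs of Theorems~\ref{th:pout_noma_downlink_1} and~\ref{th:pout_noma_uplink_1}, here simplified because no second random variable appears. Your observation that the $i=2$ expression is valid only when $a_2>(2^{R^{th}}-1)a_1$, with the outage probability equal to one otherwise, is a genuine feasibility caveat that the paper leaves implicit and is worth recording.
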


\subsubsection{Uplink NOMA for Near User - Terrestrial and Aerial} Here, we derive analytical expressions for the outage probability of the near user in the uplink terrestrial and aerial NOMA.
\begin{theorem}
    \label{th:pout_noma_downlink_1}
    For a given channel realization of near user $U_1$, i.e., $|G^t(d_i)|$, the uplink terrestrial NOMA outage probability  denoted by $\overline{P}_{out,1}^{\mathrm{(noma,t)}}$ is analytically obtained by the following:
    \begin{align}
       \label{eq:pout_noma_uplink_terrestrial_user1} \overline{P}_{out,1}^{\mathrm{(noma,t)}}
         = 1- \frac{\lambda_2 e^{-\alpha_2 \lambda_1}}{\lambda_2+\alpha_1\lambda_1},
    \end{align}
    in which $\alpha_1=(2^{R^{th}} -1)P_2|G^t(d_2)|/P_1|G^t(d_1)|$ and $\alpha_2=(2^{R^{th}} -1)/P_1|G^t(d_1)|$.
\end{theorem}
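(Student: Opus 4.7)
The plan is to unwind the definition of the outage event into a probability statement about two independent exponential random variables and then evaluate it as a one-dimensional integral. Writing $\gamma \triangleq 2^{R^{th}}-1$, the event $\overline{C}_1^{\mathrm{(noma,t)}} < R^{th}$ is equivalent to
\begin{equation*}
\frac{P_1 G^t(d_1)\Phi_1}{P_2 G^t(d_2)\Phi_2 + 1} < \gamma,
\end{equation*}
which, upon clearing the (positive) denominator and dividing by $P_1 G^t(d_1)$, becomes $\Phi_1 < \alpha_1 \Phi_2 + \alpha_2$ with the $\alpha_1,\alpha_2$ defined in the statement. So the outage probability is exactly $\Pr[\Phi_1 < \alpha_1 \Phi_2 + \alpha_2]$.

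Next, I would invoke the independence of the small-scale fading coefficients $\Phi_1$ and $\Phi_2$ (each exponential with parameter $\lambda_1$ and $\lambda_2$ respectively by \eqref{eq:exponential_pdf}) and condition on the value of $\Phi_2$. By the tower property,
\begin{equation*}
\overline{P}_{out,1}^{\mathrm{(noma,t)}} = \int_0^\infty F_{\Phi_1}(\alpha_1 x + \alpha_2)\, f_{\Phi_2}(x)\, dx,
\end{equation*}
and substituting the exponential cdf from \eqref{eq:exponential_cdf} and exponential pdf from \eqref{eq:exponential_pdf} reduces the problem to a single elementary integral. Splitting the integrand into two pieces, one of which integrates $\lambda_2 e^{-\lambda_2 x}$ to give $1$, and the other which yields $\lambda_2 e^{-\alpha_2 \lambda_1}\int_0^\infty e^{-(\lambda_2+\alpha_1\lambda_1)x}\, dx = \frac{\lambda_2 e^{-\alpha_2 \lambda_1}}{\lambda_2+\alpha_1\lambda_1}$, gives the claimed closed form.

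There is no serious obstacle in this derivation, which is essentially a two-variable convolution-style computation. The only subtle point worth being explicit about is that the large-scale gains $G^t(d_1)$ and $G^t(d_2)$ are treated as deterministic (the theorem is stated conditional on a given path-loss realization), so all randomness is absorbed into $\Phi_1$ and $\Phi_2$; this both justifies treating $\alpha_1,\alpha_2$ as constants with respect to the expectation and is what allows the Rayleigh fading powers to be handled via their exponential law rather than via a more involved joint distribution. The same argument template will apply later, with $F_{\chi_i}$ replacing $F_{\Phi_i}$, to the aerial Rician-fading counterpart, although there the inner integral will no longer be elementary in closed form.
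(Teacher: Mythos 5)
Your proposal is correct and follows essentially the same route as the paper: rewrite the outage event as $\Pr\{\Phi_1 < \alpha_1 \Phi_2 + \alpha_2\}$, use independence of the two exponential fading variables, and evaluate the resulting double integral (your conditioning on $\Phi_2$ and integrating the cdf $F_{\Phi_1}$ is just the paper's inner integral carried out first). The closed form $1-\lambda_2 e^{-\alpha_2\lambda_1}/(\lambda_2+\alpha_1\lambda_1)$ matches.
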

\begin{proof}
    For a given channel realization of $U_1$ we have:
     \begin{align*}
        \overline{P}_{out,1}^{\mathrm{(noma,t)}}
        &=
        \textrm{Pr}\{\overline{C}_1^{\mathrm{(noma,t)}}<R^{th}\}
        \\
        &=
        \textrm{Pr}\left\{ \frac{P_1 |G^t(d_1)|\Phi_1}{P_2 |G^t(d_2)|\Phi_2+1}<2^{R^{th}}-1
        \right\}
        \\
        &=
        \textrm{Pr}\left\{ y<\alpha_1 x + \alpha_2 \right\},
     \end{align*}
     where $y=\Phi_1$ and $x=\Phi_2$ are exponentially distributed variables associated with $U_1$ and $U_2$, respectively, 
     and $\alpha_1=(2^{R^{th}} -1)P_2|G^t(d_2)|/P_1|G^a(d_1)|$ and $\alpha_2=(2^{R^{th}} -1)/P_1|G^t(d_1)|$. By considering that $x$ and $y$ are uncorrelated random variables, we have 
      \begin{align*}
        \overline{P}_{out,1}^{\mathrm{(noma,t)}}
			&= \textrm{Pr}\{ y < \alpha_1 x + \alpha_2  \} 
			\\&=
			\int_{0}^{\infty} \int_{0}^{\alpha_1 x + \alpha_2} f_{\phi_1}(y) f_{\phi_2}(x) dy dx
			\\&=
			\lambda_1 \lambda_2
			\int_{0}^{\infty} \int_{0}^{\alpha_1 x + \alpha_2} e^{-\lambda_1y - \lambda_2x} dy dx
		    \\&=
		    1- \frac{\lambda_2 e^{-\alpha_2 \lambda_1}}{\lambda_2+\alpha_1\lambda_1}. 
	\end{align*}
\end{proof}
\begin{theorem}
    \label{th:pout_noma_uplink_1}
    For a given channel realization  $|G^a(h,r_i)|$, the uplink aerial NOMA outage probability for $U_1$ denoted by $\overline{P}_{out,1}^{\mathrm{(noma,a)}}$ is analytically obtained by the following equation:
    \begin{align}
       \label{eq:pouttttttt} &\overline{P}_{out,1}^{\mathrm{(noma,a)}}
         =
       n_1 n_2 e^{-(K_1+K_2)}
            \sum_{k_1=0}^{\infty} 
            \sum_{k_2=0}^{\infty}
            \frac{m_1^{k_1} m_2^{k_2}  }
            {4^{k_1+k_2} {k_1}!{k_2}! } \times \nonumber
                \\
                &
                \hspace{10pt}
            \Bigg[ 
                \frac{1}{n_1^{k_1+1} n_2^{k_2+1}}
                  - \frac{\alpha_2^{k_1+k_2+1}}{e^{n_2 \alpha_2}\alpha_1^{k_1+1}} \sum_{k_3=0}^{k_2} \frac{\alpha_2^{-k_3} n_2^{-k_3-1} }{(k_2-k_3)!} \times \nonumber
            \\ 
                &                 \hspace{25pt} \Psi(k_1+1,k_1 +k_2-k_3+2, (n_1+n_2\alpha_1)(\frac{\alpha_2}{\alpha_1})  
            \Bigg],
    \end{align}
    in which $n_i=\frac{1+K_i}{\Omega_i}, m_i=\frac{4 K_i(1+K_i)}{\Omega_i}, i=1,2$, and  $\Psi$ is Tricomi Confluent Hyper-Geometric function and $\alpha_1=(2^{R^{th}} -1)P_2|G^a(h,r_2)|/P_1|G^a(h,r_1)|$ and $\alpha_2=(2^{R^{th}} -1)/P_1|G^a(h,r_1)|$.
    
    \begin{proof}
        See \textbf{Appendix \ref{apx:proof_of_th_pout_noma_uplink_1}}.
    \end{proof}
    
\end{theorem}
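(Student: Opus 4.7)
The strategy is to reduce the outage probability to a one-dimensional integral against the Rician density of $\chi_2$, expand both the Rician pdf and cdf appearing in it via the Taylor series of $\mathcal{I}_0$, and identify each surviving one-variable integral with the Tricomi confluent hypergeometric function $\Psi$. Only the independence of $\chi_1,\chi_2$ and the series $\mathcal{I}_0(z)=\sum_{k\ge 0}(z/2)^{2k}/(k!)^2$ are needed.

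From \eqref{eq:user1_aerial_noma_up} the event $\overline{C}_1^{(\mathrm{noma,a})}<R^{th}$ is equivalent to $\chi_1<\alpha_1\chi_2+\alpha_2$ with $\alpha_1,\alpha_2$ as in the statement, and independence gives
\begin{equation*}
\overline{P}_{out,1}^{(\mathrm{noma,a})}=\int_0^{\infty}F_{\chi_1}(\alpha_1 x+\alpha_2)\,f_{\chi_2}(x)\,dx.
\end{equation*}
Inserting the series for $\mathcal{I}_0$ into the Rician density and integrating term by term with the help of the finite-sum identity $\gamma(k+1,z)=k!\,[1-e^{-z}\sum_{j=0}^{k}z^{j}/j!]$ for the lower incomplete gamma function yields
\begin{equation*}
f_{\chi_i}(t)=n_i e^{-K_i}e^{-n_i t}\sum_{k=0}^{\infty}\frac{(m_i t)^{k}}{4^{k}(k!)^{2}},
\end{equation*}
\begin{equation*}
F_{\chi_i}(z)=e^{-K_i}\sum_{k=0}^{\infty}\frac{m_i^{k}}{4^{k}k!\,n_i^{k}}\Bigl[1-e^{-n_i z}\sum_{j=0}^{k}\frac{(n_i z)^{j}}{j!}\Bigr].
\end{equation*}
Letting $z\to\infty$ yields the normalization identity $e^{-K_i}\sum_{k}m_i^{k}/(4^{k}k!\,n_i^{k})=1$, which is exactly what produces the $1/(n_1^{k_1+1}n_2^{k_2+1})$ summand inside the square bracket of \eqref{eq:pouttttttt} once the common prefactor $n_1 n_2 e^{-K_1-K_2}\sum m_1^{k_1}m_2^{k_2}/(4^{k_1+k_2}k_1!k_2!)$ is pulled out front.

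Substituting these expansions into the outage integral and interchanging sums and integrals (justified by absolute convergence of the Bessel series on bounded sets together with exponential decay in $x$ at infinity), every nontrivial summand reduces to an integral of the form
\begin{equation*}
e^{-n_1\alpha_2}\int_0^{\infty}x^{p}(\alpha_1 x+\alpha_2)^{q}e^{-(n_2+n_1\alpha_1)x}\,dx
\end{equation*}
for nonnegative integer exponents $p,q$ dictated by the summation indices. The affine substitution $t=\alpha_1 x/\alpha_2$, which turns $\alpha_1 x+\alpha_2$ into $\alpha_2(1+t)$, casts this integral in the Tricomi representation
\begin{equation*}
\Psi(a,b,z)=\frac{1}{\Gamma(a)}\int_0^{\infty}e^{-zt}t^{a-1}(1+t)^{b-a-1}\,dt,
\end{equation*}
with $a=p+1$, $b=p+q+2$, and argument of the form $(n_1+n_2\alpha_1)\alpha_2/\alpha_1$. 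Collecting the $\alpha_1^{-(p+1)}\alpha_2^{p+q+1}$ factor generated by the substitution, the $e^{-n_1\alpha_2}$ coming from the incomplete-gamma tail, and the $p!$ produced by $\Gamma(a)$ reproduces the second term inside the bracket of \eqref{eq:pouttttttt}.

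The main obstacle is bookkeeping rather than analysis: three running indices appear (one from each Bessel series and one from the incomplete-gamma expansion), and one must carefully track which factorials originate from the Bessel denominators, which from $\gamma(k+1,\cdot)$, and which from $\Gamma(a)$ in the Tricomi representation, together with the powers of $\alpha_1,\alpha_2,n_1,n_2$ shuffled by the affine substitution, so that the three indices collapse into the clean double sum with the inner $k_3$-summation displayed in \eqref{eq:pouttttttt}.
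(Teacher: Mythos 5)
Your proposal is correct and follows essentially the same route as the paper's Appendix A: expand the Rician laws via the Taylor series of $\mathcal{I}_0$, evaluate the inner integral (equivalently, the cdf of $\chi_1$ at $\alpha_1 x+\alpha_2$) through the finite-sum form of the lower incomplete gamma function, and identify the surviving one-dimensional integral with the Tricomi function via the substitution $t=\alpha_1 x/\alpha_2$. The only substantive difference is that your bookkeeping is internally consistent with $y=\chi_1$, $x=\chi_2$ and $n_i=(1+K_i)/\Omega_i$, which yields $e^{-n_1\alpha_2}$ and decay rate $n_2+n_1\alpha_1$, whereas the published formula (and the appendix, which attaches $n_1,m_1$ to $x=\chi_2$) carries the user labels $1\leftrightarrow 2$ interchanged in exactly those two places --- a labeling slip in the paper rather than a gap in your argument.
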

The following lemma obtains the uplink terrestrial and aerial outage probability of $U_2$ which is the farther user.
\begin{lemma}
     For a given  large-scale channel realization  $|G^t(d_2)|$, the terrestrial uplink outage probability for $U_2$ is obtained as follows:
     \begin{align}
     \label{poutmmmmm}
     \underline{P}_{out,2}^{\mathrm{(noma,t)}}=F_{\Phi_2}(\alpha),
     \end{align}
     where $\alpha={(2^{R^{th}} -1)}/{P_2|G^t(d_2)|}$ and $F_{\Phi_2}$ is given by \eqref{eq:exponential_cdf}. Besides, for a given  large-scale channel realization  $|G^a(h,r_2)|$, the aerial uplink outage probability for $U_2$ is obtained as
     \begin{align}         
     \label{eq:pout_downlink_aerial_user2}
     \underline{P}_{out,2}^{\mathrm{(noma,a)}}=F_{\chi_2}(\alpha),
     \end{align}
     where $\alpha={(2^{R^{th}} -1)}/{P_2|G^a(h,r_2)|}$ and $F_{\chi_2}$ is given by \eqref{eq:rician_cdf}.
\end{lemma}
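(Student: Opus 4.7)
The plan is to obtain both parts of the lemma by a direct application of the definition of outage probability to the uplink spectral-efficiency formulas already established for $U_2$ in the terrestrial and aerial NOMA cases. The key simplification, which distinguishes $U_2$ from $U_1$, is that in uplink NOMA the BS decodes the strongest user first, so the weaker user $U_2$ is interference-free. Its uplink spectral efficiency is therefore a pure SNR expression, namely $\overline{C}_2^{(\mathrm{noma,t})}=\log_2(1+P_2 G^t(d_2)\Phi_2)$ and $\overline{C}_2^{(\mathrm{noma,a})}=\log_2(1+P_2 G^a(h,r_2)\chi_2)$. Because there is no interfering random variable to integrate against, no joint distribution needs to be unfolded as was the case for Theorem~\ref{th:pout_noma_downlink_1} and Theorem~\ref{th:pout_noma_uplink_1}.

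Concretely, I would first write
\[
\underline{P}_{out,2}^{(\mathrm{noma,t})} = \Pr\bigl\{\overline{C}_2^{(\mathrm{noma,t})}<R^{th}\bigr\} = \Pr\bigl\{\log_2(1+P_2|G^t(d_2)|\Phi_2)<R^{th}\bigr\},
\]
then invert the monotone map $z \mapsto \log_2(1+z)$ to reduce the event to $\{P_2|G^t(d_2)|\Phi_2 < 2^{R^{th}}-1\}$. Since the large-scale gain $|G^t(d_2)|$ is conditioned on (and is positive), this is equivalent to $\{\Phi_2<\alpha\}$ with $\alpha=(2^{R^{th}}-1)/(P_2|G^t(d_2)|)$. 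Taking the probability yields $F_{\Phi_2}(\alpha)$ by definition of the cdf given in \eqref{eq:exponential_cdf}, which establishes \eqref{poutmmmmm}.

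For the aerial part, the same three-line argument carries over verbatim with $\chi_2$ in place of $\Phi_2$, $G^a(h,r_2)$ in place of $G^t(d_2)$, and the Rician cdf $F_{\chi_2}$ from \eqref{eq:rician_cdf} in place of the Rayleigh cdf, giving \eqref{eq:pout_downlink_aerial_user2}. There is no real obstacle in this proof since no interference term appears; the only thing to be careful about is making the conditioning on the large-scale gain explicit so that $\alpha$ is a deterministic constant when the cdf is evaluated. The entire argument can therefore be presented as a single short chain of equalities for each of the two cases.
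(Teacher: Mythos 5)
Your proof is correct and is precisely the argument the paper intends; indeed the paper omits a proof of this lemma entirely because, as you note, the interference-free uplink rate of $U_2$ reduces the outage event to a single-variable cdf evaluation $\Pr\{\Phi_2<\alpha\}=F_{\Phi_2}(\alpha)$ (and likewise for $\chi_2$). Nothing further is needed.
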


\section{Numerical Results and Discussions}

In this section, we compare the performance of aerial and terrestrial NOMA and OMA for a variety of network parameters. Consider a circular area of radius $500$ m. The noise power spectral density is assumed to be $10^{-10}$ W/Hz, carrier frequency is $2.5$ GHz, $\eta_{LOS}=1.6$, $\eta_{NLOS}=23$, and the coefficients $a$ and $b$ in \eqref{eq:PLOS} are $12.8$ and $0.11$, respectively.
In all simulation scenarios, we consider  
the Rayleigh fading parameter $\lambda_i$ and the Rician fading power $\Omega_i$ are equal to unity for $i=\{1,2\}$.  For each snapshot of the simulation, $100$ users are considered to be randomly scattered in the cell area.

The distance of each user $i$ to the BS (i.e., $r_i$) is a uniformly distributed random variable (i.e., $r_i \sim U(0,R)$ in which $R$ is the cell radius). By using order statistics, the PDF and CDF of the near and far users are obtained, respectively, as follows:
\begin{subequations}
\begin{align}
    f(r_{\mathrm{min}})&=f(r_1)=N \left[1-F(r) \right]^{N-1} f(r)
    \\
    f(r_{\mathrm{max}})&=f(r_2)=N F(r)^{N-1} f(r)
\end{align}
\end{subequations}
where $f(r) = 2 r/R^2$ and $F(r)=r^2/R^2$ are the PDF and CDF of uniformly distributed variable $r\sim U(0,R)$. Now, by averaging the conditional outage probability $P_{out,1}^{\mathrm{noma}}(r_{1} ,r_{2})$ of near user in uplink NOMA  \eqref{eq:pout_noma_uplink_terrestrial_user1} and \eqref{eq:pouttttttt}
over the joint PDF of $r_{1}$ and $r_{2}$, we compute the outage numerically through standard mathematical software  \texttt{Mathematica}.
\begin{align}
    \label{eq:pout_noma_versus_r}
   & \mathbb{E}_{r_1,r_2}\{P_{out,1}^{\mathrm{(noma)}}(r_{1},r_{2})\}  
   \nonumber\\&=\int \int f(r_{1},r_{2}) P_{out,1}^{\mathrm{(noma)}}(r_{1},r_{2}) dr_{1}dr_{2}, \quad i\in\{1,2\}
   \nonumber\\&\approx \int_0^R \int_0^R   f(r_{1}) f(r_{2}) P_{out,1}^{\mathrm{(noma)}}(r_{1},r_{2}) dr_{1}dr_{2}.
\end{align}
Note that the ranked variables $r_1$ and $r_2$ are not independent; however, there dependence is relatively weak \cite{tabassum2015spectral}. Therefore, we approximate the joint PDF $f(r_{1},r_{2}) \approx f(r_{1}) f(r_{2})$. For the far user in uplink NOMA, we compute the outage probability as $\mathbb{E}_{r_2}\{P_{out,2}^{\mathrm{(noma)}}(r_{2})\} $. On the other hand, for downlink NOMA, the outage probabilities of near and far users can be given as  $\mathbb{E}_{r_1}\{P_{out,1}^{\mathrm{(noma)}}(r_{1})\} $ and  $\mathbb{E}_{r_2}\{P_{out,2}^{\mathrm{(noma)}}(r_{2})\} $, respectively. 

\subsection{Validation of derived
\label{sec:sim_closed_form}
closed-form outage probabilities}
\label{sec_sim1}
\begin{figure}
    \centering
    \includegraphics[width=\linewidth]{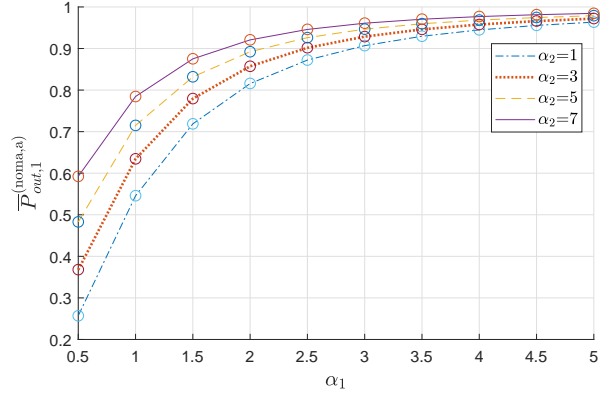}
    \caption{Comparison of the outage probabilities versus $\alpha_1$ and $\alpha_2$ through derived expression in  \eqref{eq:pouttttttt} and  corresponding values obtained through simulation. Analytical values are shown by lines and  Monte-carlo simulations are shown in circles.}
    \label{fig:outage_verify}
\end{figure}
{Since the expressions in
\eqref{eq:pout_noma_downlink_terrestrial}-\eqref{eq:pout_downlink_aerial_user2} are special cases of \eqref{eq:pouttttttt}}, we validate \eqref{eq:pouttttttt} through Monte-Carlo simulations. Fig. \ref{fig:outage_verify} depicts the outage probability  $\overline{P}^{(\mathrm{noma,a})}_{out,1}$ in \eqref{eq:pouttttttt} 
for different values of $\alpha_1$ and $\alpha_2$. It is seen that the values obtained through derived expressions (shown by lines), exactly match those obtained through Monte-Carlo simulations (shown in circles).

\subsection{Downlink NOMA: Terrestrial vs Aerial}
\label{sec:sim_downlink}

\begin{figure*}[htp]
  \subfigure[aerial scenario]{ \includegraphics[width=0.5\textwidth]{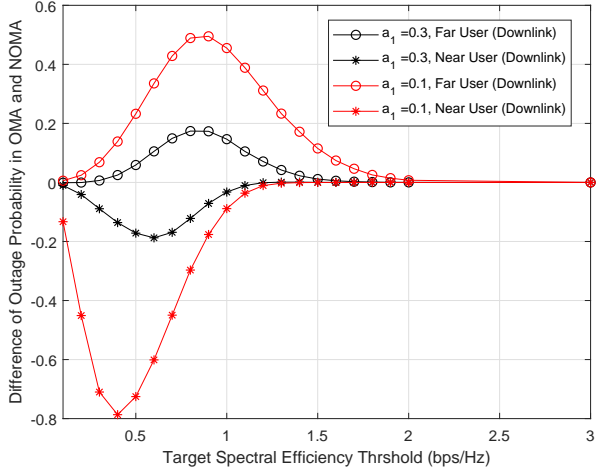}
    }
    \subfigure[terrestrial scenario]{
    \includegraphics[width=0.5\textwidth]{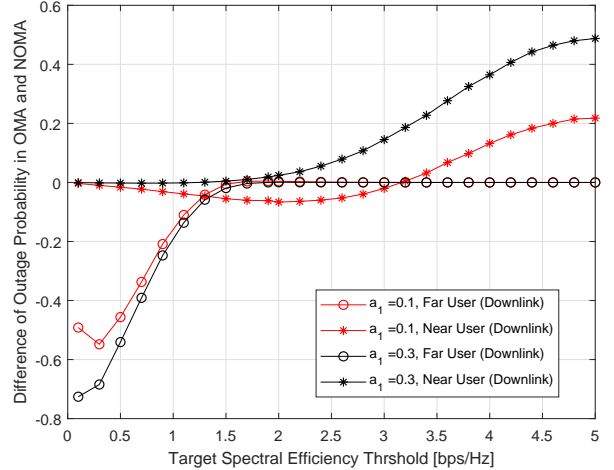}
        }
  \caption{Comparison of downlink aerial and terrestrial NOMA and OMA outage probabilities for near and far users versus varying target spectral efficiency threshold and power allocation coefficient $a_1$, UAV altitude $h=1500$~m, BS total transmit power $P=5$~W, Rice-K factor  $K_i$=10 and Rayleigh fading factor $\lambda_i=1$ for $i=\{1,2\}$.}
  \label{fig:downlink_aerial_terrestrial_vs_target_rate}
\end{figure*}

\begin{figure}
    \centering
    \includegraphics[scale=0.5]{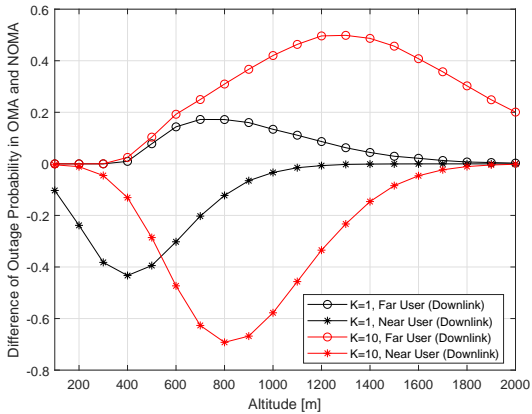}
    \caption{Comparison of downlink aerial NOMA and OMA outage probabilities for near and far users versus the altitude of UAV and different values of Rician $K$ factor, UAV altitude $h=1500$~m, BS total transmit power $P=5$ W, and target spectral efficiency threshold  $R^{th}=1$ bps/Hz.}
    \label{fig:downlink_aerial_versus_altitude}
\end{figure}
In order to compare the performance of NOMA and OMA for user $i$,  we define the performance gain of NOMA over OMA in aerial network $\eta^a$ as follows:
\begin{equation}
    \eta_i^a = \overline{P}^{(\mathrm{oma,a})}_{out,i}- \overline{P}^{(\mathrm{noma,a})}_{out,i}
\end{equation}
Evidently, if $ \eta_i^a > 0$ it demonstrates the superior performance of NOMA over OMA, whereas if $ \eta_i^a < 0$ OMA outperforms NOMA.
Fig. \ref{fig:downlink_aerial_terrestrial_vs_target_rate} compares the performance of  downlink NOMA for near and far users versus target spectral efficiency threshold (i.e., $R^{th}$) for different values of power allocation  coefficient of near user $a_1$. Surprisingly, we note  that for aerial scenario, the far user always enjoys positive NOMA gain  while the near user experiences negative NOMA gain. For example for $R^{th}=1$ and $a_1=0.1$, the NOMA gain of far user is $0.47$, while that of the near user is only $-0.09$. For terrestrial scenario,  no user experiences positive NOMA gain for rather low values of $R^{th}$, and only near user can obtain reasonable NOMA gain for relatively high values of $R^{th}$ due to  SIC. That is, due to the high interference in terrestrial scenario, far user can not generally benefit from NOMA, and the near user can only benefit from NOMA for rather high values of $R^{th}$. As opposed to terrestrial network, we note  that the far user in aerial downlink can  benefit from NOMA for lower values of $R^{th}$ compared to the near user. The reason is low interference due to farther transmission distances and LOS transmissions. For some values of $R^{th}$, the positive NOMA gain of far user notably dominates the negative NOMA gain of near user.

Fig. \ref{fig:downlink_aerial_versus_altitude} shows how the altitude of UAV and Rician $K$ factor affect the performance of NOMA in aerial downlink scenario. We note that there always exists some optimal UAV height (i.e., $h$) for which users yield the best NOMA gain. For example, while it is seen that for $K=10$, near user shows the best NOMA gain with $h=1200$ m, the overall performance seems to be better at $h=1500$ m, wherein near user and far user experience the NOMA gain of $+0.45$ and $-0.08$, respectively. Besides, it is seen that far user can benefit more from NOMA with higher values of $K$, i.e., stronger LOS.
   
\subsection{Uplink NOMA: Terrestrial vs Aerial} 
\label{sec:sim_uplink}

\begin{figure*}[htp]
  \subfigure[aerial scenario]{ \includegraphics[width=0.5\textwidth]{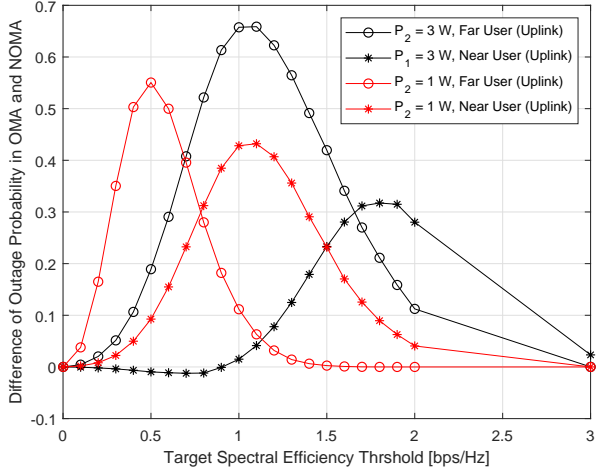}
    }
    \subfigure[terrestrial scenario]{
    \includegraphics[width=0.5\textwidth]{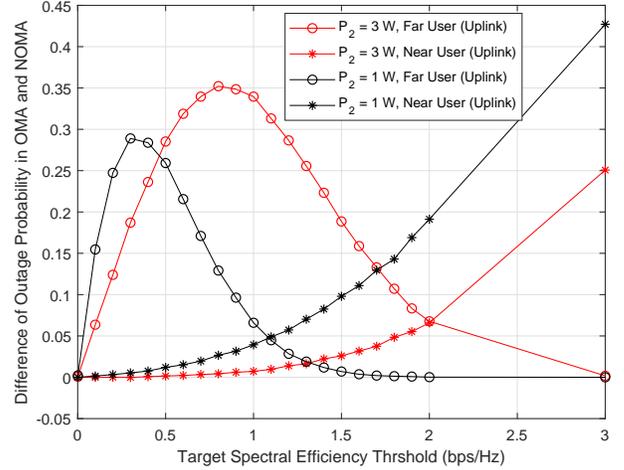}
        }
  \caption{Comparison of uplink aerial and terrestrial NOMA and OMA outage probabilities for near and far users versus varying target spectral efficiency threshold and power allocation for $P_1$ and $P_2$, UAV altitude $h=600$~m, and Rice-K factor $K_i=10$ and Rayleigh fading factor $\lambda_i=1$ for $i=\{1,2\}$.}
  \label{fig:uplink_aerial_terrestrial_vs_target_rate}
\end{figure*}

\begin{figure}
    \centering
    \includegraphics[scale=0.5]{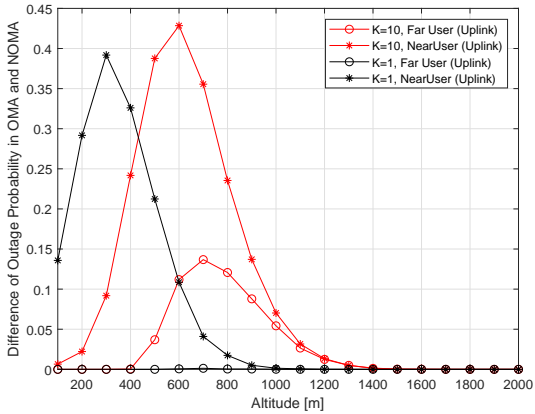}
    \caption{Difference of OMA and NOMA uplink aerial  outage probabilities for near and far users versus the altitude of UAV and different values of Rician $K$ factor, UAV altitude $h=600$~m,  transmit power per user $P_i=1$~W, and target spectral efficiency threshold  $R^{th}=1$ bps/Hz.}
    \label{fig:uplink_aerial_versus_altitude}
\end{figure}
Similar to Fig. \ref{fig:downlink_aerial_terrestrial_vs_target_rate} employed for the downlink NOMA,
Fig. \ref{fig:uplink_aerial_terrestrial_vs_target_rate} compares the performance of uplink NOMA and OMA versus the target spectral efficiency threshold $R^{th}$ and different values of transmit powers for near and far users. Firstly, it is seen that unlike the downlink NOMA wherein the positive NOMA gain of one user mostly leads to the negative NOMA gain of the user, for the uplink scenario, both users can generally benefit simultaneously from NOMA for both terrestrial and aerial cases. 
It is also seen that increasing the transmit power of both near and far users in either of the aerial and terrestrial scenarios, results  in the degradation  of the performance of far and near user. Besides, since the far user in terrestrial network experiences very poor channel, as shown in the figure, such user can only benefit from NOMA for rather high values of the target spectral efficiency threshold.


Fig. \eqref{fig:uplink_aerial_versus_altitude} shows the NOMA gain of uplink aerial scenario for different values of Rician $K$ factor versus the altitude of UAV. Similar to the downlink scenario, it is seen that increasing the value of Rician $K$ factor improves the performance of NOMA (by increasing the LOS strong component of signal). More specifically, for $K=1$, far user receives negligible NOMA gain, while for $K=10$, both users benefit  from NOMA compared to the case for $K=1$. It is also seen that the optimal height of UAV depends also on the value of $K$, i.e., for higher values of $K$, the LOS component of signal has more impact on the performance, and since the LOS probability increases with the height of UAV, it is seen that for higher values of $K$, the NOMA optimal performance is experienced on higher values of UAV altitude. 

\section{Conclusion}
In this work, we provided theoretical expressions and numerical results to compare the outage performance of NOMA and OMA in aerial vs. terrestrial networks considering both the uplink and downlink scenarios. The comparison of NOMA gain for aerial networks  and terrestrial networks was explored considering different parameters such as fading shape factor, transmit power level, target spectral efficiency threshold and UAV altitude. For example, it was  shown through numerical results that the aerial network yields more NOMA gain  for low values of the target spectral efficiency threshold, while the trend is reverse for  high values of target spectral efficiency threshold. 


\begin{appendices}
    \section{Proof of Theorem \ref{th:pout_noma_uplink_1}}
    \label{apx:proof_of_th_pout_noma_uplink_1}
     The outage probability  of $U_i$ is given as
     \begin{align}
     \label{eq:t23}
        \overline{P}_{out,i}^{\mathrm{(noma,a)}}
        &=
        \textrm{Pr}\{\overline{C}_1^{\mathrm{(noma,a)}}<R^{th}\}
        \\
        &=
        \textrm{Pr}\left\{ \frac{P_1 |G^a(h,r_1)|\chi_1}{P_2 |G^a(h,r_2)|\chi_2+1}<2^{R^{th}}-1
        \right\}
        \\
        &=
        \textrm{Pr}\left\{ y<\alpha_1 x + \alpha_2 \right\}, 
     \end{align}
     where $y=\chi_1$ and $x=\chi_2$ are the Rician fading variables associated with $U_1$ and $U_2$, respectively 
     and $\alpha_1=(2^{R^{th}} -1)P_2|G^a(h,r_2)|/P_1|G^a(h,r_1)|$ and $\alpha_2=(2^{R^{th}} -1)/P_1|G^a(h,r_1)|$.
     By considering $n_i=\frac{1+K_i}{\Omega_i}$, $m_i=\frac{4 K_i(1+K_i)}{\Omega_i}$, $C=n_1n_2e^{-(K_1+K_2)}$, and using the Taylor series of the Modified Bessel function of $I_0(x)=\sum_{k=0}^{\infty} \frac{(x/2)^{2k}}{(k!)^2}$, and by considering that $y$ and $x$ are Rician distributed random variables corresponding to $U_1$ and $U_2$, respectively,  \eqref{eq:t23} can be written as in the following.

    \begin{align*}
        \label{eq:pout_noma1_proof}
        &\overline{P}_{out,1}^{\mathrm{(noma,a)}}
			= \textrm{Pr}\{ y < \alpha_1 x + \alpha_2  \} 
			\\
			&=  \int_{0}^{\infty} \int_{0}^{\alpha_1 x + \alpha_2} f_{\chi_1}(y) f_{\chi_2}(x) dy dx
			\\
			&=	
			 C
			 \sum_{k_1=0}^{\infty} 
			 \sum_{k_2=0}^{\infty}
			 \int_{0}^{\infty}
			 \frac{e^{-n_1 x} m_1^{k_1} m_2^{k_2} x^{k_1}}
			{4^{k_1+k_2} ({k_1}!{k_2}!)^2}
			\int_{0}^{\alpha_1 x + \alpha_2} 
	        \hspace{-10pt} e^{-n_2 y}
			y^{k_2} dy dx
	    \\&=	
            C
            \sum_{k_1=0}^{\infty} 
            \sum_{k_2=0}^{\infty}
            \int_{0}^{\infty}
             \frac{e^{- n_1 x} m_1^{k_1} m_2^{k_2} x^{k_1}}
            {4^{k_1+k_2} ({k_1}!{k_2}!)^2}
            \times
            \\
            & \hspace{80pt}
            \left[
            e^{-n_2 y} \sum_{k_3=0}^{k_2} \frac{ - k_3!\binom{k_2}{k_3}}{(n_2)^{k_3+1}} y^{k_2-k_3}
            \right]_0^{\alpha_1 x + \alpha_2}
            dx
         \\&=
            C
            \sum_{k_1=0}^{\infty} 
            \sum_{k_2=0}^{\infty}
            \frac{m_1^{k_1} m_2^{k_2}  }
            {4^{k_1+k_2} ({k_1}!{k_2}!)^2 } 
            \Bigg[
                \frac{k_2!}{n_2^{k_2+1}}
                \int_{0}^{\infty}
                e^{- n_1  x  }  x^{k_1}
                dx  -
                \\
                &
                 \sum_{k_3=0}^{k_2} \frac{k_3!\binom{k_2}{k_3}}{n_2^{k_3+1}}e^{-n_2 \alpha_2} \!
                \int_{0}^{\infty} \!\!
                e^{- (n_1 + n_2\alpha_1)  x  }  x^{k_1}
                (\alpha_1 x +\alpha_2)^{k_2-k_3}
                dx \Bigg]
         \\&=
            C
            \sum_{k_1=0}^{\infty} 
            \sum_{k_2=0}^{\infty}
            \frac{m_1^{k_1} m_2^{k_2}  }
            {4^{k_1+k_2} {k_1}!{k_2}! } \times
                \\
                & \hspace{15pt}
                \Bigg[ 
                \frac{1}{n_1^{k_1+1} n_2^{k_2+1}}
                  - \frac{\alpha_2^{k_1+k_2+1}}{e^{n_2 \alpha_2}\alpha_1^{k_1+1}} \sum_{k_3=0}^{k_2} \frac{\alpha_2^{-k_3} n_2^{-k_3-1} }{(k_2-k_3)!} \times
                  \\
                  &\hspace{42pt}
                \Psi\left(k_1+1,k_1 +k_2-k_3+2, (n_1+n_2\alpha_1)(\frac{\alpha_2}{\alpha_1}) \right) 
                \Bigg]. 
    \end{align*}
    This completes the proof.

     \bibliographystyle{IEEEtran}
\bibliography{Uplink}

\end{appendices}

\end{document}